\newcommand\blfootnote[1]{%
  \begingroup
  \renewcommand\thefootnote{}\footnote{#1}%
  \addtocounter{footnote}{-1}%
  \endgroup
}
\def\ps@headings{%
\def\@oddhead{\mbox{}\scriptsize\rightmark \hfil \thepage}%
\def\@evenhead{\scriptsize\thepage \hfil\leftmark\mbox{}}%
\def\@oddfoot{}%
\def\@evenfoot{}}
\def\blfootnote{\xdef\@thefnmark{}\@footnotetext} 
\newtheorem{prop}{Proposition}
\newtheorem{rem}{Remark}
\title{Modelling Load Balancing and Carrier Aggregation in Mobile Networks$^1$}
\author{ F. B\'en\'ezit, S. Elayoubi, R-M. Indre, A. Simonian$^2$,
\\
Orange Labs, Issy-les-Moulineaux, France}
\begin{document}
\maketitle  

\begin{abstract}

In this paper, we study the performance of multicarrier mobile networks. Specifically, we analyze the flow-level performance of two inter-carrier load balancing schemes and the gain engendered by Carrier Aggregation (CA). CA is one of the most important features of HSPA+ and LTE-A networks; it allows devices to be served simultaneously by several carriers. We propose two load balancing schemes, namely Join the Fastest Queue (JFQ) and Volume Balancing (VB),  that allow the traffic of CA and non-CA users to be distributed over the aggregated carriers. We then evaluate the performance of these schemes by means of analytical modeling. We show that the proposed schemes achieve quasi-ideal load balancing. We also investigate the impact of mixing traffic of CA and non-CA users in the same cell and show that performance is practically insensitive to the traffic mix. 

\end{abstract}


\section{Introduction}
\label{sec:intro}


\blfootnote{$^1$ This work has been partially financed by the ANR IDEFIX project.}
\blfootnote{$^2$ The order of listing the authors is alphabetical.}

One of the most important features of High Speed Packet Access (HSPA) and Advanced Long Term Evolution (LTE-A) is {\it Carrier Aggregation} (CA) which allows users to be served simultaneously by two or more carriers. 
The HSPA spectrum is divided into carriers of 5 MHZ each, while the LTE spectrum is divided into carriers with sizes ranging from 1.4 MHZ to 20 MHZ. The classical way of managing these carriers is to consider an independent scheduler per carrier. This type of resource management, however, may result into inefficiencies due to load discrepancies between carriers. To cope with this inefficiency, HSPA+ has defined carrier aggregation. Specifically,  Release 8 introduces the Dual Carrier (DC) feature where  two carriers are aggregated on the frequency band of 2.1 GHZ. Release 10 has extended this concept to the aggregation of a carrier on 2.1 GHZ with another carrier on the 900 MHZ band; this is the Dual Band (DB) HSPA feature in which the two carriers can have significantly different capacities due to the difference of propagation conditions between the 900 MHZ and the 2.1 GHZ bands. In LTE-A, it is possible to aggregate two or more carriers of different sizes, leading to large capacity difference between carriers.

It is widely agreed that carrier aggregation brings gain in the following two aspects. First, the user peak rates is substantially increased (e.g., doubled for DC devices). However, this is only true at low load where the user is almost always alone in the cell. In this paper, we investigate the {\it actual} throughput gain of DC users when the capacity of the carriers is dynamically shared by several mobile users. The second advantage is joint scheduling or {\it pooling}. Indeed, since a single scheduler is used for two or more carriers, the scheduler can implement intelligent load balancing schemes so as to equalize the traffic over the carriers and thus increase the traffic capacity. In the present paper, we propose two load balancing schemes and evaluate their performance in terms of traffic capacity and average flow throughput. 

\subsection{Related work}

Quantifying the gain of carrier aggregation in downlink HSPA (HSDPA) has been the object of several papers. Authors in \cite{Holma} have shown that DC HSDPA Release 8 
doubles the throughput of dual carrier users only at low network load. The authors also show that even at high network load, this feature still yields considerable gain when compared to single carrier HSPA Release 7. In \cite{6}, the gains expected from frequency diversity and higher multi-user diversity have been evaluated; these gains are used as inputs for the capacity model used in the present paper. Simulation results are provided in \cite{7} and \cite{8} for full-buffer traffic, i.e., the number of users in the system is fixed and these users have an infinite amount of data to transmit. However, the trunking gain cannot be observed for full buffer traffic, as explained in \cite{6}. This gain has been assessed in \cite{bon11}; results show that we have a 45\% improvement in DC-HSDPA capacity. The latter could exceed 160\% if Dual Carrier is combined to MIMO (DC HSDPA Release 9).

 Another set of works has focused on carrier aggregation in the uplink. The feasibility of dual carrier in uplink HSPA  (HSUPA) has been studied in \cite{3}. It is shown that, in cells with a relatively large radius, cell edge users can barely benefit from their dual carrier capabilities. The pooling gain for DC HSUPA has been studied in \cite{Amal}, where it has been shown that, due to power limitations at the user device, the gain from carrier aggregation is lower in the uplink than in the downlink. 

In LTE-A, carrier aggregation has been initially designed in order to allow the extension of the offered bandwidth by transmitting over multiple carriers, so that, from a user's point of view, the aggergated carriers are seen as a single, large carrier \cite{Ratasuk}, \cite{Yuan}. System level simulations in \cite{Pedersen} show that the carrier aggregation gain is large at low load and decreases when the number of scheduled users becomes large. The same trend is shown in \cite{Zhang}, also using system level simulation; it is also observed that the  full buffer traffic model yields an over-estimation of the carrier aggregation gain. Paper \cite{Wang} confirms this trend in the framework of inter-band carrier aggregation considering carriers on 800 MHZ and 2.1 GHZ frequency bands.

\subsection{Contribution} 
The contribution of this paper is twofold. First, we propose two inter-carrier load balancing schemes that allow to efficiently split incoming CA and non-CA traffic over the two carriers. We then analyse the performance of the proposed schemes by modelling the cell occupancy by means of a Markov process. Based on this analytical model, we show that the proposed schemes achieve quasi-ideal load balancing over the aggregated carriers. 

As discussed in the previous section, most of the existing work on carrier aggregation uses system level simulation based on the full-buffer traffic model in order to assess the performance gains of carrier aggregation. While the full-buffer model allows to accurately model the lower layers and to estimate the impact of physical effects such as path loss, shadowing and fast fading, it does not represent a realistic traffic model since it does not account for the dynamic behaviour of users. There is thus a need for analytical models that capture the flow-level dynamics and help understand the system performance. This paper fills this gap by proposing analytical models that allow to evaluate the system performance while accounting for the following essential features: 

\begin{itemize}
\item Aggregation of carriers of different capacities, which is the case of LTE-A and DB HSDPA
\item A mix of CA capable devices and non-CA devices. This includes the case of legacy devices that do not support CA coexisting with CA devices and also the case of users that are covered by only one of the carriers (e.g., an indoor user that is covered by the 900 MHZ band only)
\item The interaction of the load balancing feature with the CA. Indeed, legacy users do not select randomly the serving Carrier as it is supposed in \cite{bon03}, but take into account the rate that they will get on both carriers before making a decision.
\end{itemize}




\section{System description}
\label{sec:network}


In this section, we present the considered network architecture and briefly describe the scheduling schemes proposed for {\it Single Carrier} (SC) and {\it Dual Carrier} (DC)  users.
 
\subsection{Radio conditions}
We consider the downlink of a cell equipped with two carriers. The capacity of each carrier is time-shared between active users. There are two types of users in the cell: SC users, which are allowed to use only one of the two carriers and DC users, which are allowed to simultaneously use both carriers. The resources of the two carriers are shared using a single joint scheduler, as explained in the next section.

Let $C_1$ and $C_2$ denote the maximum peak data rates at which a user may transmit on Carrier 1 and Carrier 2 in the vicinity of the base station. This peak bit rate depends on the radio environment, the bandwidth and the coding efficiency. Typically, the peak data rate varies over time due to user mobility, shadowing and multipath reflections. As the aim of this paper is not to model these physical effects, we simply consider that there are $J$ areas in the cell, with area $A_j$ characterized by throughputs $C_{1,j}$ and $C_{2,j}$ corresponding to the maximum rate that can be attained on Carrier 1 and Carrier 2 in area $A_j$ when a single user is present in the cell. Furthermore, we assume that the position of active users remains constant in time and that the probability of being in area $j$ is equal to $q_j$. A classical assumption is to consider a cell divided into concentric rings and that users are uniformly distributed over the cell surface. Let $R$ denote the total cell radius. The probability of being in area $A_j$ is thus given by
\begin{equation}
q_j=\frac{\vert A_j \vert}{\pi R^2}
\label{qj}
\end{equation}
where $\vert A_j \vert = \pi(r_j^2-r_{j-1}^2)$ is the area of ring $j$ ($r_j$ is the radius of ring $j$ and by convention, $r_0 = 0$ and $r_J=R$). 

\subsection{Scheduling schemes}
\label{sec:Sched1}
The scheduler must first decide how to assign SC and DC users to one or both carriers and then share the capacity of each carrier among the different users.


Upon the arrival of a new SC user, the scheduler decides which of the two carriers should serve this user for its entire session; and upon the arrival of an incoming DC flow, the scheduler must decide how much of the volume of that flow should be treated by each carrier, given that DC users are served by both carriers. We assume that this volume allocation can vary during the DC session. 

Once these scheduling decisions are made, the scheduler fairly divides the capacity of each carrier among the active users present on each carrier. Specifically, if $n_1$ users are present on Carrier $1$, each user in area $A_j$ is allocated a fraction $C_{1,j}/n_1$ of the total carrier rate $C_{1,j}$; this corresponds to a \textit{Processor Sharing} (PS) service discipline in which the server capacity is equally shared among active users on that Carrier. Similarly, if $n_2$ users are present on Carrier $2$, each user in area $A_j$ is allocated a fraction $C_{2,j}/n_2$. 

We propose SC and DC scheduling schemes that maximize the immediate rate of each user in order to study a \emph{best-case scenario}.
\subsubsection{Scheduling SC users}
\label{sec:Sched1SC}

We propose to schedule single carrier users via the \textit{Join the Fastest Queue}  (JFQ) scheduling policy, where the incoming SC flow is assigned to the carrier which would provide this flow with the smallest completion time, or equivalently with the largest throughput. Specifically, assume that Carriers 1  and  2  are currently serving $n_1$ and $n_2$ flows, respectively. Under the assumption of fair sharing,  the carrier that provides the smallest completion time is  the one that ensures the largest ratio among $C_{1,j}/(n_1+1)$ and $C_{2,j}/(n_2+1)$. Note that the completion times are computed only on the basis of the state of the system at the arrival time of the SC customer; system state changes (i.e., arrival or departure of other flows) that may occur during the transmission of the considered flow are thus not (and cannot be) taken into account in the scheduling scheme.


\subsubsection{Scheduling DC users}
\label{sec:Sched1DC}

Suppose that, at a given moment, the base station has an amount of data $\sigma_0$ that is relative to a given DC user in its buffer. The scheduler must decide how to split the volume $\sigma_0$ between the two carriers. Let $\sigma_1$ and $\sigma_2$ denote the volumes of that flow respectively transferred by Carrier 1 and Carrier 2 (with $\sigma_1 + \sigma_2 = \sigma_0$), and let $T_1$ and $T_2$ be the associated completion times. Ideally, volume $\sigma_0$ should be split such that the transfer of volumes $\sigma_1$ and $\sigma_2$ is completed simultaneously on both carriers. In fact, the actual completion time of the DC flow is determined by the slowest of the two carriers, i.e., $T = \max(T_1,T_2)$. To minimize $T$, the total volume must be split such that the completion times on the two carriers are identical, i.e., $T_1=T_2$. In the following, we will propose a scheduling policy that minimizes $T$ and ensures simultaneous completion times. It will be referred to as the {\it Volume Balancing} (VB) and will be formally defined in Section \ref{DCarr}. 



In practice, perfect Volume Balancing may be difficult to implement due to rapidly changing rates caused by fast fading. The scheduler should send batches of frames to each carrier, the size of a batch being  \emph{proportional to the current rate of each carrier}. The  idea is that batch sizes adapt to the time-varying rates, and so the carriers finish serving the batches of frames at the same time. Note that this mechanism works if the typical batch service time is smaller than the typical interval during which the rate remains constant. If this condition is not fulfilled, an alternative is to compute a moving average of carrier rates over a sliding window and adapt to this more stable rate value. In order to derive best-case results, we hereafter neglect the above implementation details and study an ideal scheduler, where the data is perfectly split among carriers.



\section{System model}
\label{sec:model}


In this section, the network model is presented. We first specify the proposed scheduling policies, i.e. Join the Fastest Queue and Volume Balancing, in terms of system parameters. We then model the system by means of a Markov process and discuss system stability and throughput performance.

\subsection{Scheduler modeling}
\label{JS}

Given the fair sharing assumption (cf. \S \ref{sec:Sched1}), the system can be modeled by two multiclass PS queues 
representing the two carriers, as depicted in Figure \ref{fig:Mix}. At any time $t$, the state of the system is defined by the 3$J$-dimensional vector 
$$
\mathbf{n}(t) = (n_{1,1}(t),n_{2,1}(t),m_1(t); \ldots ;n_{1,J}(t),n_{2,J}(t),m_J(t) )
$$
where 
\begin{itemize}
\item[-] $n_{1,j}(t)$ 
is the number of SC users in area $A_j$ currently transmitting on Carrier 1,
\item[-] $n_{2,j}(t)$ 
is the number of SC users in area $A_j$ currently transmitting on Carrier 2,
\item[-] $m_j(t)$ 
is the number of DC users in area $A_j$ currently transmitting on Carriers 1 and 2. 
\end{itemize}
The fact that a single variable is sufficient to characterize DC flows in both queues will be justified at the end of this section. For $1 \le j \le J$, let $\mathbf{e}_{1,j}$, $\mathbf{e}_{2,j}$ and $\mathbf{e}_{3,j}$, denote the unit vectors such that $\mathbf{e}_{k,j}$ has the $(3j+k-1)$-th element equal to 1, while all other elements are 0. Finally, define the rate functions $D_{1,j}$ and $D_{2,j}$ by
\begin{equation}
\left\{
\begin{array}{ll}
D_{1,j}(\mathbf{n}(t)) = \displaystyle \frac{C_{1,j}}{n_1(t)+m(t)}, 
\\ \\
D_{2,j}(\mathbf{n}(t)) = \displaystyle \frac{C_{2,j}}{n_2(t)+m(t)}
\end{array} \right.
\label{defD1D2}
\end{equation}
where
$$
n_1(t)=\sum_{j=1}^{J} n_{1,j}(t), \; \; n_2(t)=\sum_{j=1}^{J} n_{2,j}(t), \; \; m(t)=\sum_{j=1}^{J} m_j(t).
$$

\begin{figure}[ht]
\begin{center}
\includegraphics[width=5cm, trim=1cm 3cm 1cm 3cm, clip]{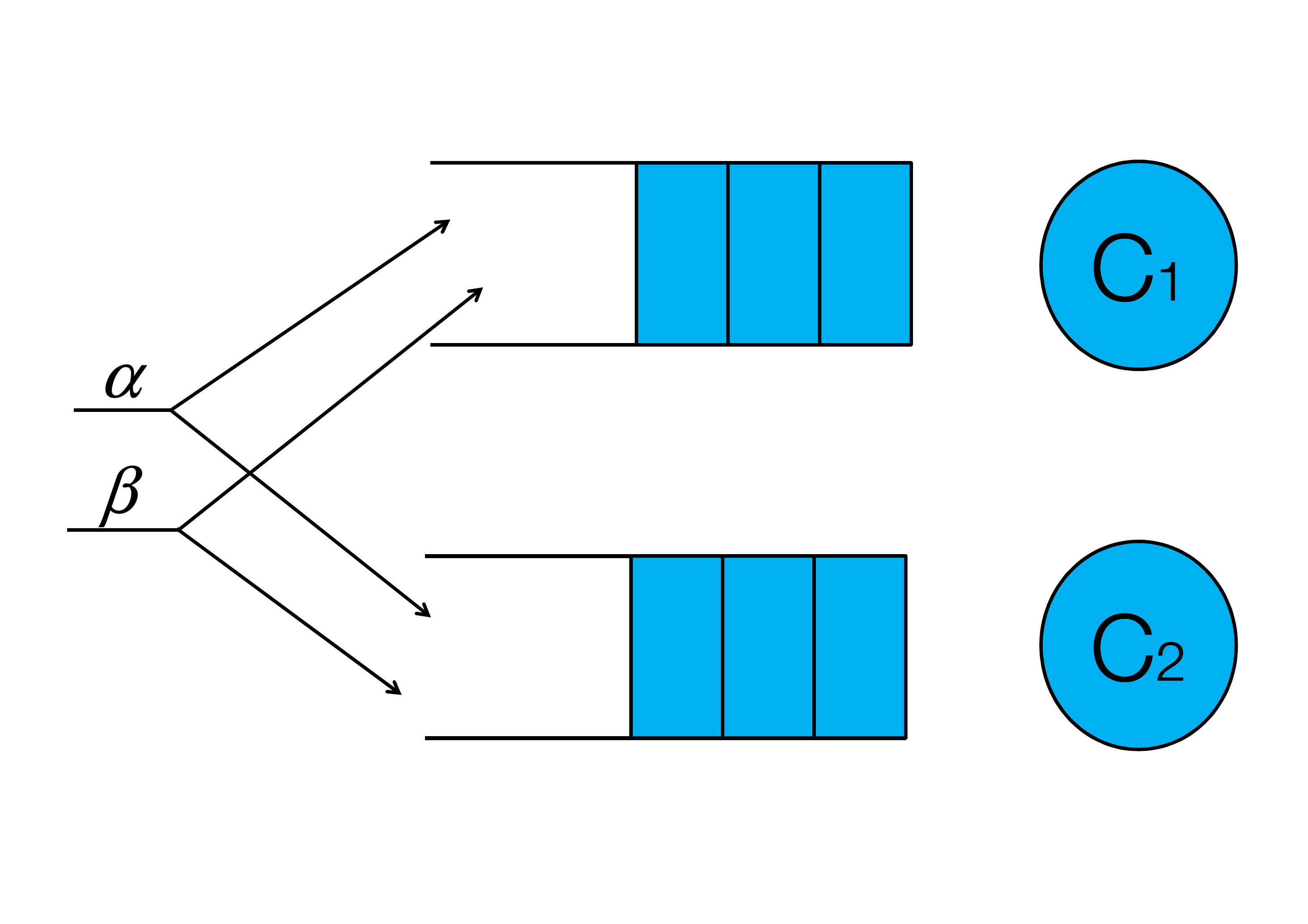}
\caption{\label{fig:Mix} \textit{Queueing model for joint SC and DC flows.}}
\end{center}
\end{figure}

\subsubsection{JFQ policy for SC users}
\label{SCarr}

As explained in Section \ref{sec:Sched1SC}, upon the arrival of a SC user at time $t$ in area $A_j$, the user is directed towards either Carrier 1 queue or Carrier 2 queue according to the JFQ scheduling discipline. Let $t_+$ denote any instant close enough to time $t$ so that no other event occurs in the system during interval $]t,t_+]$; the system state at time $t_+$ is then updated as follows: 
\begin{itemize}
\item[\textbf{\textit{i1)}}] if $\displaystyle \frac{C_{1,j}}{n_1(t)+m(t)+1} > \frac{C_{2,j}}{n_2(t)+m(t)+1}$, \\ \\
the flow is directed to Carrier 1 and $\mathbf{n}(t_+) = \mathbf{n}(t)+\mathbf{e}_{1,j}$;
\\
\item[\textbf{\textit{i2)}}]  if $\displaystyle \frac{C_{2,j}}{n_2(t)+m(t)+1} > \frac{C_{1,j}}{n_1(t)+m(t)+1}$,
\\ \\
the flow is directed to Carrier 2 and $\mathbf{n}(t_+) = \mathbf{n}(t)+\mathbf{e}_{2,j}$;
\\
\item[\textbf{\textit{i3)}}] if $\displaystyle \frac{C_{1,j}}{n_1(t)+m(t)+1} = \frac{C_{2,j}}{n_2(t)+m(t)+1}$, \\ \\
the flow is directed either to Carrier 1 or to Carrier 2 with probability $1/2$.
\\
\end{itemize}

\subsubsection{VB policy for DC users}
\label{DCarr}

DC flows are scheduled according to the VB policy which is defined by the following property:

for any given DC flow in any area $A_j$, and for any time interval $[t,t')$ where the system state $\mathbf{n}(\cdot)$ remains constant, Carrier $1$ and Carrier $2$ transfer volumes $\sigma_1$ and $\sigma_2$, respectively, such that
\begin{equation}
\left\{
\begin{array}{ll} 
\sigma_1 = D_{1,j}(\mathbf{n}(t))(t'-t), \\ \\
\sigma_2 = D_{2,j}(\mathbf{n}(t))(t'-t).
\end{array} \right.
\label{defvolumes}
\end{equation}
In other words, this property ensures that any DC flow exploits the full potential of both its carrier ressources, i.e. the VB property characterizes the best possible scheduler. The feasibility of Volume Balancing was discussed in \ref{sec:Sched1DC}. Volume Balancing ensures the following interesting property:  


\begin{rem} 
The VB policy ensures that the transfer of any DC flow on Carriers 1 and 2 is completed at the same time on both carriers.

\label{PropVB}
\end{rem}

In fact, as soon as a DC flow exists at time $t$, the VB policy ensures that \emph{both} carriers are serving the flow. Therefore, they finish serving simultaneously.

\begin{rem}\label{rem:flo}
In the interval $[t,t')$ defined above, frames are served proportionally to rates:
\begin{equation}
t'-t= \frac{\sigma_1}{D_{1j}(\mathbf{n}(t))} =  \frac{\sigma_2}{D_{2j}(\mathbf{n}(t))} = \frac{\sigma_1 + \sigma_2}{E_j(\mathbf{n}(t))}
 \label{eq:alain}
 \end{equation}
where
 \begin{equation}
E_j(\mathbf{n}(t)) = D_{1j}(\mathbf{n}(t)) + D_{2j}(\mathbf{n}(t)).
\label{Ej}
\end{equation}
\end{rem}
This readily follows from Eq.~\eqref{defvolumes} and simple algebra. The last equality in Eq.~\eqref{eq:alain} illustrates that the system behaves as a single carrier of rate $E_j(\mathbf{n}(t))$ for the entire DC volume $\sigma_1+\sigma_2$.

Given that DC flows arrive and depart simultaneously in both queues, a single state variable can be therefore used to characterize the number of DC flows in the system, as claimed in the beginning of this section.

\subsection{A Markov process}
\label{sec:Markov}

Assume that SC and DC flows arrive in the system according to Poisson processes with respective arrival rates $\alpha$ and $\beta$, and that the volume of data of any (SC or DC) user is exponentially distributed with mean $\sigma$. We denote by $\Lambda=\alpha+\beta$ the total arrival rate in the cell. We further assume that the traffic is uniformly distributed over the cell such that in area $A_j$, SC and DC flows arrive with intensities $\alpha_j=\alpha q_j$ and $\beta_j= \beta q_j$, with probability $q_j$ introduced in (\ref{qj}). 

Based on the assumptions of exponential flow size distribution and Poisson flow arrivals and given conditions \textbf{\textit{i1}}, \textbf{\textit{i2}} and \textbf{\textit{i3}} defined in Section \ref{SCarr}, the system state described by vector $\mathbf{n}(t)$, $t \geq 0$,  defines a Markov process which can move in an infinitesimal time interval from state $\mathbf{n}$ to state:

\begin{itemize}
\item[T1)] $\mathbf{n} + \mathbf{e}_{1j}$ with transition rate
$
\alpha_j \mathbf{1}_{\textbf{\textit{i1}}}
+ \frac{\alpha_j}{2}\mathbf{1}_{\textbf{\textit{i3}}} 
$
\item[T2)] $\mathbf{n} + \mathbf{e}_{2j}$ with transition rate
$
\alpha_j \mathbf{1}_{\textbf{\textit{i2}}}
+ \frac{\alpha_j}{2}\mathbf{1}_{\textbf{\textit{i3}}} 
$
\item[T3)] $\mathbf{n} + \mathbf{e}_{3j}$ with transition rate $\beta_j$
\item[T4)] $\mathbf{n} - \mathbf{e}_{1j}$ with transition rate $n_{1j}D_{1j}(\mathbf{n})/\sigma$, since Carrier 1 behaves as a PS queue with departure rate proportional to the number $n_{1j}$ of SC users in $A_j$
\item[T5)] $\mathbf{n} - \mathbf{e}_{2j}$ with transition rate $n_{2j}D_{2j}(\mathbf{n})/\sigma$,  since Carrier 2 behaves as a PS queue with departure rate proportional to the number $n_{2j}$ of SC users in $A_j$
\item[T6)] $\mathbf{n} - \mathbf{e}_{3j}$ with transition rate $r_j(\mathbf{n)} = m_j E_j(\mathbf{n})/\sigma$, where $E_j(\mathbf{n})$ is defined in Eq.~\eqref{Ej}.

\end{itemize}
To justify the value of $r_j(\mathbf{n})$, recall from Remark \ref{rem:flo} that the completion of a DC flow ends simultaneously on both carriers, which behave as one single carrier of rate $E_j(\mathbf{n})$.

\subsection{System stability}
\label{sec:stab}

We now address the system stability. We start by considering the case in which radio conditions are uniform over the entire cell so that $J=1$ and the capacity of each carrier is simply $C_{1,1}=C_1$ and $C_{2,1}=C_2$.

\begin{prop}
Defining the system load (with single area) by
\begin{equation}
\rho = \frac{(\alpha+\beta)\sigma}{C_1+C_2},
\label{eq:rhoJ1}
\end{equation}
the stability condition 
is $\rho < 1$.
\end{prop}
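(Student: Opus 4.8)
The plan is to prove positive recurrence of the Markov process $\mathbf n(\cdot)$ by a Foster--Lyapunov drift argument, taking as Lyapunov function the total number of flows in the cell,
\begin{equation*}
V(\mathbf n) = N = n_1 + n_2 + m .
\end{equation*}
I work throughout in the single-area case $J=1$ and drop the area index, writing $D_1=C_1/(n_1+m)$, $D_2=C_2/(n_2+m)$ and $E=D_1+D_2$. The condition $\rho<1$ is clearly necessary: work enters at mean rate $(\alpha+\beta)\sigma$ while the two carriers together drain work at rate at most $C_1+C_2$, so a law-of-large-numbers comparison of cumulative arrived versus served volume forces the unfinished work to diverge whenever $(\alpha+\beta)\sigma>C_1+C_2$. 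The substance is therefore the sufficiency direction, $\rho<1\Rightarrow$ stability.

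First I would compute the mean drift $\mathcal{A}V$ from the transition rates T1--T6. The upward rate is $\alpha+\beta$ (from T1--T3, using $\alpha_1=\alpha q_1=\alpha$), and the decisive step is to simplify the downward rate coming from T4--T6:
\begin{equation*}
\frac{n_1 D_1 + n_2 D_2 + m\,E}{\sigma}
= \frac{1}{\sigma}\!\left[\frac{(n_1+m)C_1}{n_1+m} + \frac{(n_2+m)C_2}{n_2+m}\right]
= \frac{C_1\mathbf 1_{\{n_1+m>0\}} + C_2\mathbf 1_{\{n_2+m>0\}}}{\sigma},
\end{equation*}
where the DC term $m\,E$ is exactly what recombines into the full rate $C_i$ of each busy carrier (empty carriers contributing no departures). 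Hence, on every state in which \emph{both} carriers are busy, $n_1+m>0$ and $n_2+m>0$, the drift is the strictly negative constant
\begin{equation*}
\mathcal{A}V(\mathbf n) = (\alpha+\beta) - \frac{C_1+C_2}{\sigma} = \frac{C_1+C_2}{\sigma}\,(\rho-1) < 0 .
\end{equation*}

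The main obstacle is the boundary. This drift is negative only when both carriers are busy; on the faces where one carrier is idle --- say $n_1=m=0$ with $n_2$ arbitrarily large, i.e.\ all flows are SC users sitting on Carrier~2 --- Carrier~1 contributes nothing and the drift degrades to $(\alpha+\beta)-C_2/\sigma$, which need not be negative. What rescues stability is the JFQ routing: once Carrier~2 carries enough flows that $C_2/(n_2+1)<C_1$, rule \textbf{\textit{i1}} directs every incoming SC flow to the empty Carrier~1, so the process leaves this face at rate at least $\alpha$ (or $\beta$ through a DC arrival), a rate bounded below independently of $n_2$. Thus the chain cannot linger on the idle-carrier faces, and the negative interior drift must dominate any excursion.

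To make this rigorous I would pass to the fluid limit and invoke the standard fluid-stability criterion for positive recurrence. Writing $\bar{\mathbf n}(\cdot)$ for a fluid limit of the suitably scaled process, the routing argument shows that on any face $\{\bar n_1+\bar m=0,\ \bar n_2>0\}$ (and its symmetric counterpart) the positive SC arrival rate is pushed entirely onto the empty carrier, so the fluid spends zero Lebesgue-time there. Consequently, for almost every $t$ with $\bar N(t)>0$ both carriers are busy and $\dot{\bar N}(t)=\tfrac{C_1+C_2}{\sigma}(\rho-1)<0$; hence $\bar N$ hits $0$ in finite time and remains there, which is fluid stability and yields positive recurrence of $\mathbf n(\cdot)$. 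One could instead try to augment $N$ with a bounded correction term supplying negative drift on the idle faces, but the fluid route is cleaner to state. The delicate point in either approach --- and the step I expect to require the most care --- is the justification that the JFQ rules \textbf{\textit{i1}}--\textbf{\textit{i3}} force both carriers to stay busy whenever $\bar N>0$, which is precisely where the load-balancing nature of the policy is essential.
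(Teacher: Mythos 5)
Your proof follows essentially the same route as the paper: a fluid-limit stability argument whose engine is the cancellation $n_1 D_1 + n_2 D_2 + mE = C_1 + C_2$, the only cosmetic difference being that you take the total flow count as Lyapunov function while the paper sums the fluid workloads $V_1+V_2+W$ and reads off the constant drift $(\alpha+\beta)\sigma - C_1 - C_2$ directly. The one place you go beyond the paper is the boundary: the paper's summed fluid equation is written as if both carriers were always busy, silently passing over the faces $n_1+m=0$ or $n_2+m=0$ where the drain rate drops to a single $C_i$, whereas you explicitly flag this and invoke the JFQ routing rules to argue the fluid limit spends zero time there --- a point of rigor the paper's own proof leaves implicit.
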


\begin{proof}
A {\it necessary} condition for stability for a single area cell  is that the total traffic intensity $(\alpha + \beta)\sigma$ must be less than the total cell capacity $C_1 + C_2$, that is, $\rho < 1$ with load $\rho$ defined in (\ref{eq:rhoJ1}). 

To prove that $\rho < 1$ is also a {\it sufficient} stability condition, we use a fluid limit approach  (\cite{robert}, Chap.9). Specifically, we denote by
\begin{itemize}
\item $V_1$, the volume (in Mbit) of SC traffic in queue 1,
\item $V_2$, the volume (in Mbit) of SC traffic in queue 2,
\item $W$, the volume (in Mbit) of DC traffic in both queues.
\end{itemize}
Given the assumptions of exponential flow size and Poisson arrivals, the system state described by vector $(V_1(t),V_2(t),W(t))$ defines a Markov process in $\mathbb{R}^{3+}$. The latter is ergodic if the fluid model is stable in the sense that, starting from any initial state, the total fluid volume $V_1+V_2+W$ reaches a bounded state of $\mathbb{R}^{3+}$ in finite time. Using the transition rates expressed in Section \ref{sec:Markov} with $J = 1$, the evolution of the system can then be described by the equations
\begin{equation}
\left\{
\begin{array}{ll}
\displaystyle \frac{dV_1}{dt} = \displaystyle \alpha\mathbf{1}_{\mathbf{i1}}\sigma+\frac{\alpha}{2}\mathbf{1}_{\mathbf{i3}}\sigma - \frac{C_1 n_1}{n_1+m},
\\ \\
\displaystyle \frac{dV_2}{dt} = \displaystyle \alpha\mathbf{1}_{\mathbf{i2}}\sigma+\frac{\alpha}{2}\mathbf{1}_{\mathbf{i3}}\sigma - \frac{C_2 n_2}{n_2+m},
\\ \\
\displaystyle \frac{dW}{dt} = \displaystyle \beta\sigma - \left ( \frac{C_1}{n_1+m}+\frac{C_2}{n_2+m} \right ) m,
\end{array} \right.
\label{eq:fluid}
\end{equation}
where conditions $\textbf{\textit{i1}}$, $\textbf{\textit{i2}}$ and $\textbf{\textit{i3}}$ are introduced in Section \ref{SCarr}, and $n_1$, $n_2$ denote the number of SC customers in queue 1 queue 2, respectively,  while $m$ is the number of DC customers in both queues. Summing all three equations (\ref{eq:fluid}) side by side, we obtain $ \frac{d}{dt}(V_1+V_2+W) = (\alpha + \beta) \sigma - C_1 - C_2$ 
which $< 0$ as long as  $\rho < 1$. This ensures that the total volume $V_1 + V_2 + W$ is a decreasing function of time, and that the system empties in a finite time. Inequality $\rho < 1$ is therefore also a  sufficient condition for stability.
\end{proof}

Let us now analyze the system stability in the more general case with $J>1$ different areas in the cell, each with its own radio conditions.
\begin{prop}
Define the system capacity $\overline{C}$ (with multiple areas) by the harmonic mean
\begin{equation}
\frac{1}{\overline{C}} =  \sum_{j=1}^J \frac{q_j}{C_{1,j} + C_{2,j}}
\label{eq:cap}
\end{equation}
of total capacities $C_{1,j} + C_{2,j}$ in area $A_j$, weighted by probabilities $q_j$, $1 \leq j \leq J$, introduced in (\ref{qj}). Denoting the system load by
\begin{equation}
\rho = \frac{(\alpha + \beta)\sigma}{\overline{C}},
\label{eq:load_gen}
\end{equation}
then $\rho < 1$ is a necessary stability condition.
\end{prop}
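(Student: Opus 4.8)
The plan is to mirror the flow/rate--conservation idea behind the $J=1$ case, but to extract the harmonic mean by balancing volume area by area. First I would assume that the Markov process $\mathbf{n}(\cdot)$ of Section~\ref{sec:Markov} is ergodic (which is what ``stable'' means here) and let $\pi$ denote its stationary law. Ergodicity forces rate conservation per class: for every area $A_j$ the long--run rate at which area-$j$ volume enters the cell equals the rate at which it is drained. Reading the departure rates off \textbf{\textit{T4}}--\textbf{\textit{T6}}, each SC and each DC user on Carrier $k$ in $A_j$ removes volume at rate $D_{k,j}(\mathbf{n})$, so that
\begin{equation}
(\alpha+\beta)q_j\sigma
= \E_\pi\!\left[(n_{1,j}+m_j)\frac{C_{1,j}}{n_1+m} + (n_{2,j}+m_j)\frac{C_{2,j}}{n_2+m}\right].
\label{eq:plan-balance}
\end{equation}

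Next I would divide \eqref{eq:plan-balance} by $C_{1,j}+C_{2,j}$ and sum over $j$. By the definition \eqref{eq:cap} of $\overline{C}$ the left-hand side collapses to exactly $\rho$, so that the target inequality $\rho\le 1$ becomes the single statement $\E_\pi[S]\le 1$, where (with the convention $S=0$ on the empty state)
$$
S = \sum_{j=1}^J \frac{(n_{1,j}+m_j)\,C_{1,j}/(n_1+m) + (n_{2,j}+m_j)\,C_{2,j}/(n_2+m)}{C_{1,j}+C_{2,j}}.
$$
The strict inequality $\rho<1$ would then follow because an ergodic, non-saturated chain visits the empty state with positive probability, and there $S=0$; hence $\E_\pi[S]$ is strictly below its maximal value once the pointwise bound $S\le 1$ is in hand.

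The crux --- and the step I expect to be the genuine obstacle --- is precisely $S\le 1$. It is transparent for the DC part: a DC user in $A_j$ holds a processor-sharing slot on \emph{both} carriers, so normalising its combined rate $C_{1,j}/(n_1+m)+C_{2,j}/(n_2+m)$ by the combined capacity $C_{1,j}+C_{2,j}$ turns its term into a convex combination of the occupancy factors $1/(n_1+m)$ and $1/(n_2+m)$; summed over all DC users these telescope to at most the per-carrier occupancy fractions, each $\le 1$. The difficulty is the SC part: an SC user consumes a slot on only one carrier yet is still divided by the \emph{combined} capacity, so the two carrier-occupancy sums no longer recombine and the naive mediant estimate only delivers $S\le 2$, i.e.\ the weaker necessary bound $\rho<2$. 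To close the gap to $1$ one must use the coupling that DC users impose between the two carrier populations --- they tie $n_1+m$ and $n_2+m$ together --- reinforced by the JFQ rule \textbf{\textit{i1}}--\textbf{\textit{i3}}, which prevents SC mass from accumulating on the carrier that is already the local bottleneck. Making this quantitative, i.e.\ showing that $\pi$ cannot concentrate on the ``split'' configurations in which each carrier is monopolised by SC users of a single disjoint area, is where I expect the bulk of the effort to lie, and it is the step on which the validity of the stated bound ultimately hinges.
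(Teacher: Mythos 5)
Your reduction is sound as far as it goes: the per-area rate-conservation identity in stationarity is the right bookkeeping, and dividing by $C_{1,j}+C_{2,j}$ and summing over $j$ does turn the claim into $\mathbb{E}_\pi[S]<1$ via the definitions (\ref{eq:cap})--(\ref{eq:load_gen}). But you have correctly located, and then left open, the only step that carries the content of the proposition, namely $S\le 1$; as written your argument delivers only $\rho<2$. For comparison, the paper's proof is a one-liner: it observes that $\sum_j\rho_j=\rho$ with $\rho_j$ as in (\ref{defrhoj}) (this is just the harmonic-mean definition of $\overline{C}$) and then cites \cite{bon03} for the fact that $\sum_j\rho_j<1$ is necessary for a \emph{work-conserving} multiclass system whose class-$j$ capacity is $C_{1,j}+C_{2,j}$. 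In other words, exactly where you stop and say ``here is the hard part,'' the paper invokes a result whose hypothesis --- work conservation with respect to the aggregated capacities --- is precisely what your ``split configuration'' shows can fail once SC users are present.

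Your obstruction is not a technical artefact of the method. Take $J=2$, $q_1=q_2=1/2$, only SC traffic, with $C_{1,1}\gg C_{2,1}$ and $C_{2,2}\gg C_{1,2}$. JFQ then sends essentially every area-1 user to Carrier 1 and every area-2 user to Carrier 2; the two carriers decouple into independent PS queues, each stable as long as $\alpha\sigma/2$ is below its own large capacity, so the system remains stable for total intensities approaching $C_{1,1}+C_{2,2}$, i.e.\ for $\rho$ close to $2$. Hence no argument can establish ``stability implies $\rho<1$'' in full generality; the pointwise bound $S\le 1$ (and with it the proposition) does hold when the ratio $C_{1,j}/(C_{1,j}+C_{2,j})$ is the same in every area --- in particular in the single-area and equal-capacity cases the paper actually computes with --- because the two occupancy fractions in $S$ then recombine to exactly $1$. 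So the gap in your proof is real and cannot be closed as stated, but it is the instructive kind: it makes explicit a hypothesis that the paper's citation of \cite{bon03} leaves implicit.
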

\indent 
\indent
\begin{proof} Denote by $\rho_{j}$ the load in area $A_j$ associated with both Carriers 1 and 2; $\rho_{j}$ is defined as the traffic intensity offered by users in area $A_j$ divided by the total capacity $C_{1,j} + C_{2,j}$ in $A_j$, namely, 
\begin{equation}
\rho_j = \frac{( \alpha_j + \beta_j ) \sigma}{C_{1,j} + C_{2,j}}.
\label{defrhoj}
\end{equation}
According to \cite{bon03}, a necessary stability condition for such a work-conserving multiclass systems is that the total offered load on both Carriers 1 and 2 must be less than 1, that is, $\sum_{j=1}^J \rho_{j} < 1$ which, in view of definitions (\ref{eq:cap}), (\ref{eq:load_gen}) and (\ref{defrhoj}), is equivalent to inequality $\rho < 1$, as claimed. 
\end{proof}

A fluid approach similar to that of Proposition 1 proves, however, difficult to extend to the multi-area case to justify the sufficiency of the stability condition $\rho<1$.   
The following simple argument can, nevertheless, be proposed in the favor of this {\it conjecture}. First define the  simple {\it Bernoulli scheduling policy} as follows: a flow arriving in area $A_j$ is scheduled on Carrier 1 with probability $C_{1,j}/ (C_{1,j} + C_{2,j})$ and on Carrier 2 with probability $C_{2,j}/ (C_{1,j} + C_{2,j}) $ (here both SC and DC traffic is scheduled according to the Bernoulli policy). For the Bernoulli policy, both carriers behave independently as a multiclass PS queue; a necessary and sufficient stability condition for Carrier 1  
under the Bernoulli policy is then (\cite{Borst03}, Proposition 3.1)
\begin{equation}
\sum_{j=1}^J \rho_{1,j} < 1
\label{eq:stabBer}
\end{equation}
where $\rho_{1,j}$ is the load on Carrier 1 induced by users from area $A_j$, that is,
\begin{equation}
\rho_{1,j} = \left [ (\alpha_j+\beta_j)\frac{C_{1,j}}{C_{1,j} + C_{2,j}} \right ] \times \frac{\sigma}{C_{1,j}} = \rho_j
\label{eq:rhoj}
\end{equation}
with $\rho_j$ given in (\ref{defrhoj}) (we similarly show that the load on Carrier 2 induced by users from area $A_j$ is $\rho_{2,j} = \rho_j$). In view of (\ref{eq:load_gen}) and (\ref{eq:rhoj}), condition (\ref{eq:stabBer}) equivalently reads $\rho < 1$. Considering the stability region for a given policy as that associated with a positive throughput, the above discussion shows that the throughput $\gamma^B$ for the Bernoulli policy is strictly positive in the capacity region defined by $\rho < 1$. Given that JFQ and VB scheduling policies both take into account the state of the system, they are expected to perform better than the blind Bernoulli scheduling in terms of throughput, we have $\gamma \ge \gamma^B > 0$ as soon as $\rho<1$. We could thus infer that the multiclass system implementing JFQ and VB scheduling is also stable in the region defined by  $\rho<1$. 

\subsection{Throughput performance}
\label{sec:SD}
If stability condition $\rho < 1$ is fulfilled, the system has a stationary distribution. Let then
$$
\Pi(\mathbf{n}) = \lim_{t \uparrow +\infty} \mathbb{P}(\mathbf{n}(t) = \mathbf{n}), \; \; \; \mathbf{n} \in \mathbb{N}^{3J},
$$
define the stationary distribution of the Markov process $\mathbf{n}(t)$, $t \geq 0$. 
That stationary distribution can be computed by writing the associated balance equations which are not given here for the sake of brevity. Solving these equations enables us to determine the stationary distribution $\Pi$ which in turn allows us to derive various performance indicators. In particular, we are interested in deriving the mean flow throughput defined as the ratio of the mean flow size to the mean flow duration. Using Little's law, the mean flow throughput $\gamma_{SC,j}$ and $\gamma_{DC,j}$ for SC and DC flows arrived  in area $A_j$ can be  expressed by
\begin{equation}
\gamma_{SC,j} = \frac{\alpha_j \sigma}{\mathbb{E}(n_{1,j} + n_{2,j})}, \; \; \; \; \; \gamma_{DC,j} = \frac{\beta_j \sigma}{\mathbb{E}(m_j)}.
\label{Gammas}
\end{equation}

Assume for instance that there are no SC users in the system. In view of (\ref{defD1D2}), rates are given by $D_{1,j}(m) = C_{1,j}/m$ and $D_{2,j}(m) = C_{2,j}/m$ when the system contains $m$ DC flows, and thus all DC clients in area $A_j$ are served with a total rate $C_{1,j}+C_{2,j}$. The system then corresponds to a multiclass PS queue where each area $A_j$ defines a different service class; the number of clients in each area $A_j$ can thus be written as
\begin{equation}
{E}(m_j)=\frac{\rho_j}{1-\rho}
\label{eq:noClients}
\end{equation} 
with $\rho=\sum_{j=1}^J \rho_j$ and $\rho_j=\beta_j\sigma/(C_{1,j}+C_{2,j})$ is the load due to DC users in area $A_j$. From (\ref{Gammas}) and (\ref{eq:noClients}), we then obtain
\begin{equation}
\gamma_{DC,j} = (C_{1,j}+C_{2,j})(1-\rho).
\label{labelDConly}
\end{equation}

In view of (\ref{labelDConly}),it is important to note that when only DC users are present in the system, the VB scheduling policy achieves ideal load balancing between the two carriers. Indeed, the throughput obtained by the DC flows under VB is equal to the throughput obtained for a single carrier of capacity $C_{1,j}+C_{2,j}$ shared according to the PS policy. 




\section{Same capacity carriers}


In this section, we analyze the case where both carriers have equal capacities. This corresponds to the Dual Carrier feature of HSDPA in Release 8 where two identical carriers of 5 MHZ, both in the 2.1 GHZ band, are aggregated. For the sake of simplicity, we consider the case of a single area, i.e. $J=1$, and drop index $j$ in the notation. The general case where $J>1$ and $C_{1,j} \ne C_{2,j}$, $1 \le j \le J$, is considered in Section \ref{sec:res}. We thus set $C_1=C_2=C$. 
Let $\phi$ denote the proportion of SC traffic out of the total traffic, so that  $\alpha=\phi\Lambda$ and $\beta = (1-\phi)\Lambda$. We set $C=1$ such that results are normalized to $C$.   

The throughput performance of DC users when only DC devices are present in the cell is simply given by (\ref{labelDConly}) for $C_1=C_2=C$, that is,

\begin{equation}
  \gamma_{DC} = 2C(1-\rho).
\label{eq:gamma_DC_sameCap}
\end{equation}

\subsection{SC users only}
\label{sec:SConly}

SC users are distributed among the two queues according to the JFQ discipline. When the two carriers have equal capacities, i.e., $C_{1} = C_{2}$ for the JFQ policy coincides with to the well-known \textit{Join the Shortest Queue} (JSQ) policy. Indeed, the shortest queue in terms of the number of active users is then also the queue which yields the smallest completion time,  which coincides with JSQ when the two carriers have the same capacity. Note by symmetry that the average number of customers in both queues is equal, i.e., $\mathbb{E}(n_1)=\mathbb{E}(n_2)=n(\rho)$. An analytical expression for $n(\rho)$ for two parallel queues of equal capacity ruled by the JSQ policy is derived in \cite{fla77}. Based on this result and using Little's law, the throughput of SC users under the JSQ policy can be written as
\begin{equation}
 \gamma_{SC}=\frac{\alpha\sigma}{n(\rho)}=\frac{C\rho}{n(\rho)}
 \label{eq:gamma_JSQ}
\end{equation}
with $\rho$ defined in (\ref{eq:rhoJ1}) and $n(\rho)$ igiven in  \cite{fla77}.

\begin{figure}[t]
\begin{center}
\includegraphics[width=7.9cm,trim=2cm 8cm 2.5cm 8cm, clip]{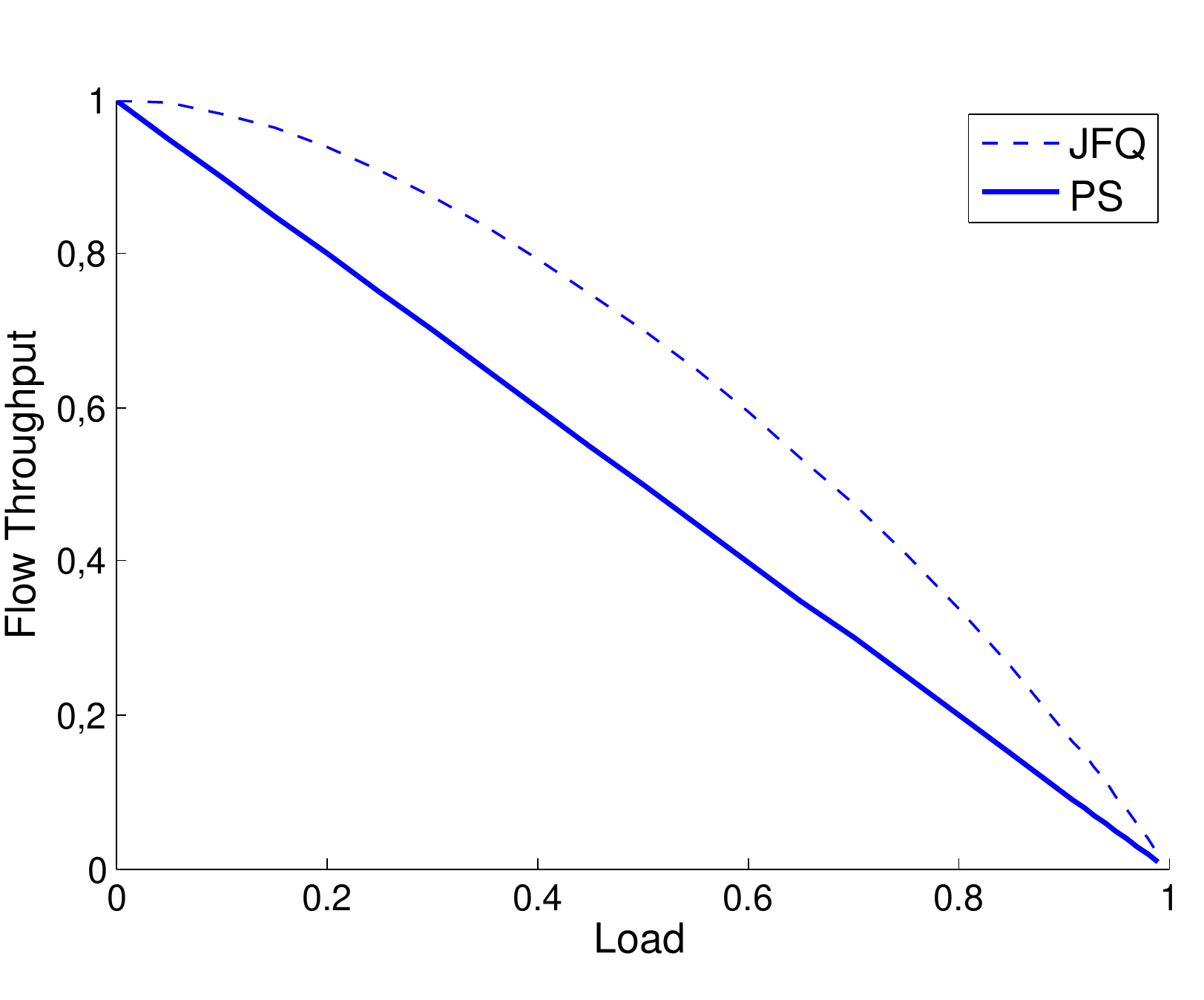}
\caption{\label{fig:JSQ} Mean throughput of SC users in terms of load: JSQ scheduling over 2 carriers of capacity is $C=1$ vs. PS scheduling over one carrier of capacity $C=1$.}
\end{center}
\end{figure}

Figure \ref{fig:JSQ} gives the mean flow throughput expressed in (\ref{eq:gamma_JSQ})  and compares it to the throughput obtained if the carrier is shared according to the PS policy. At very low network load,  PS and JSQ attain the same throughput. Indeed, when  a single SC customer is present, it is naturally provided with the maximal throughput $C$ of a single server, that is, $\gamma_{SC} = C$ for $\rho = 0$. 
As the load increases, however, the joint scheduling over 2 carriers with capacity $C$ each yields considerably higher throughput than that provided by a single PS queue of capacity $C$. 
Indeed, using results in  \cite{fla77}, it can be shown from (\ref{eq:gamma_JSQ}) that $ \gamma_{SC} \sim 2C(1-\rho$), i.e., $\gamma_{SC}$ is asymptotic to $\gamma_{DC}$ the throughput of DC users given in (\ref{eq:gamma_DC_sameCap}). In other words, at high network load, JSQ allows SC users to efficiently utilise both carriers and to attain a throughput which is equivalent to that of DC users.  

\subsection{Joint performance of SC and DC users}

We now consider the practically interesting case in which both SC and DC users are present in the cell.  Figure \ref{fig:SCDC1} represents the flow throughput as a function of the network load for different values of $\phi$. The results are obtained by numerically solving the balance equations of the Markov process $\mathbf{n}(t)$ presented in Section \ref{sec:SD}.  Note that truncation of the state space  is necessary in order to numerically solve the balance equations. This engenders less accuracy at high network load, say $\rho>0.9$. 

We note that the performance of DC users is slightly improved when $50\%$ of the traffic originates from SC users. This is because each SC user is constrained to using a single carrier which is favourable to DC users; conversely, the performance of SC users is slightly degraded by the presence of DC users. The results of Figure \ref{fig:SCDC1}, however, indicate that mixing both SC and DC traffic does not significantly impact the throughput performance of neither SC nor DC users. 
\begin{figure}[ht]
\begin{center}
\includegraphics[width=9.3cm,trim=3cm 9cm 2.5cm 9cm, clip]{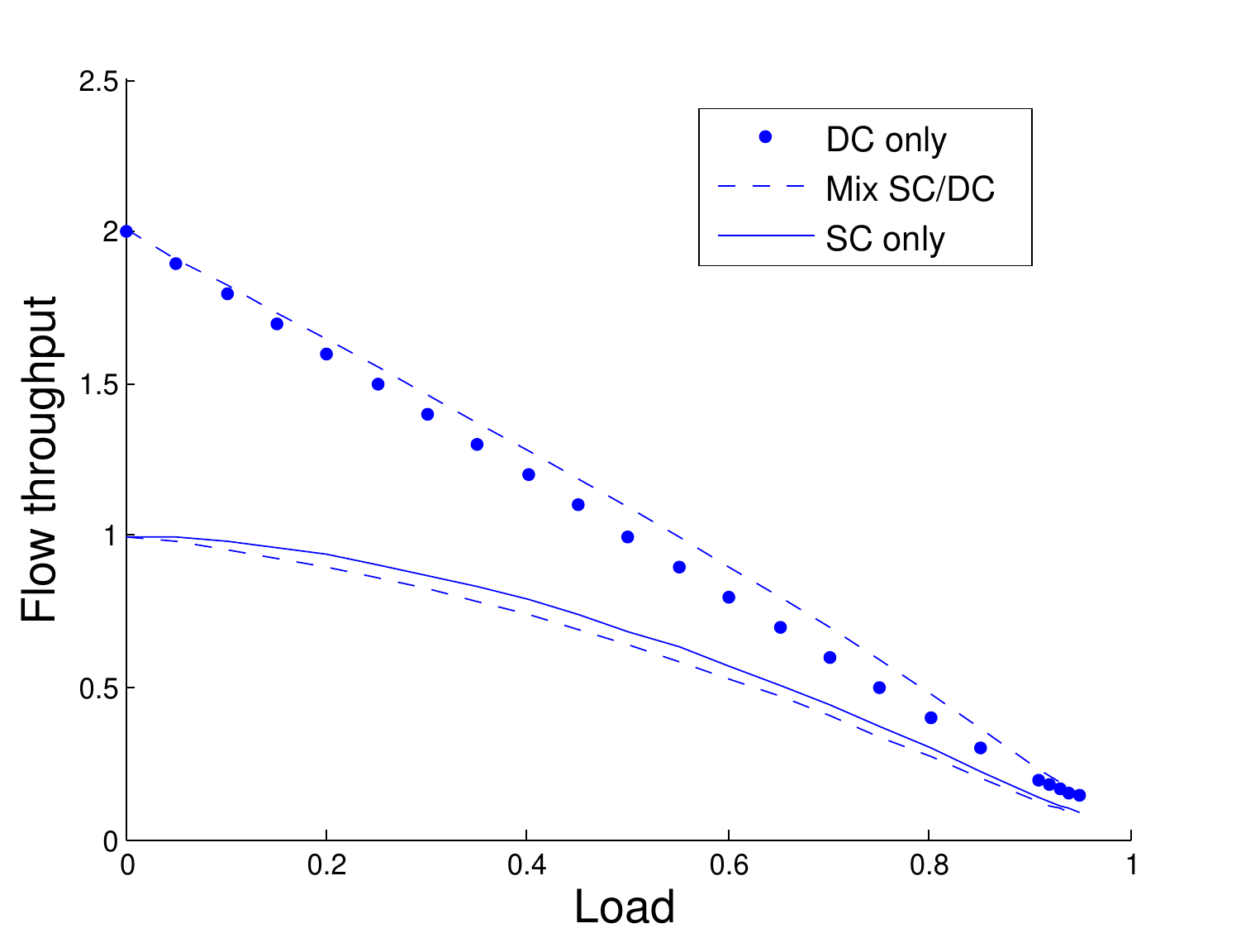}
\caption{\label{fig:SCDC1} \textit{Mean throughput of SC and DC users in terms of network load for different values of $\phi$: $\phi=0$ (only DC traffic), $\phi=0.5$ (mix of SC and DC traffic) and  $\phi=1$ (only SC traffic) and for $C=1$.}}
\end{center}
\end{figure}


\section{Different capacity carriers}
\label{sec:res}


We now consider the more general case of two carriers having different capacities, i.e. $C_1 \ne C_2$. This corresponds to  Dual Band HSDPA and in LTE-A with carriers of different sizes.  At the end of this section, we also consider the case of having two different areas in the cell. In the first part of the section, we normalize results with respect to $C_1$, i.e.,  $C_1=1$ and $C_2$ is either $C_2=1.3$ or $C_2=2$ in order to model DB HSDPA and LTE-A networks. The choice of these values is motivated as follows. DB HSDPA aggregates two carriers of 5 MHZ each, but the one in the 900 MHZ band has a slightly larger capacity due to favourable propagation conditions. On the other hand, LTE-A allows to aggregate carriers of different capacities; for instance,  a carrier of 20 MHZ  can be aggregated with a carrier of 10 MHZ within the same frequency band. A numerical application for HSDPA networks is presented in Section \ref{sec:HSDPA}.  

The performance of DC users is given by (\ref{labelDConly}) for $J=1$. We subsequently consider the case in which there are only SC users in the system and the case of mixed SC and DC users. 


\subsection{SC users only}
\label{sec:SConlydiffCapa}
We  compare the proposed JFQ discipline to 
the well-known JSQ discipline in the case in which only SC users are present in the cell, i.e., $\phi=1$. 


Figure \ref{fig:JFQ} represents the mean throughput of SC flows under the two scheduling disciplines for $C_1=1$ and $C_2=2$. The JFQ discipline clearly outperforms JSQ. This is because JFQ takes scheduling decisions based not only on the queue size but also on the capacity of each queue. 
Based on the numerical evaluations, the throughput of JFQ can be approximated by 

\begin{equation}
\gamma_{SC}\approx C_2(1-\rho),
\label{eq:SCapprox}
\end{equation}
which corresponds to the throughput of a PS server of capacity $C_2$. As $C_2>C_1$, the JFQ policy therefore yields a performance similar to that obtained when the users are always assigned to the carrier with the largest capacity.

\begin{figure}[ht]
\begin{center}
\includegraphics[width=8.3cm, trim=2cm 8cm 2.5cm 8cm, clip]{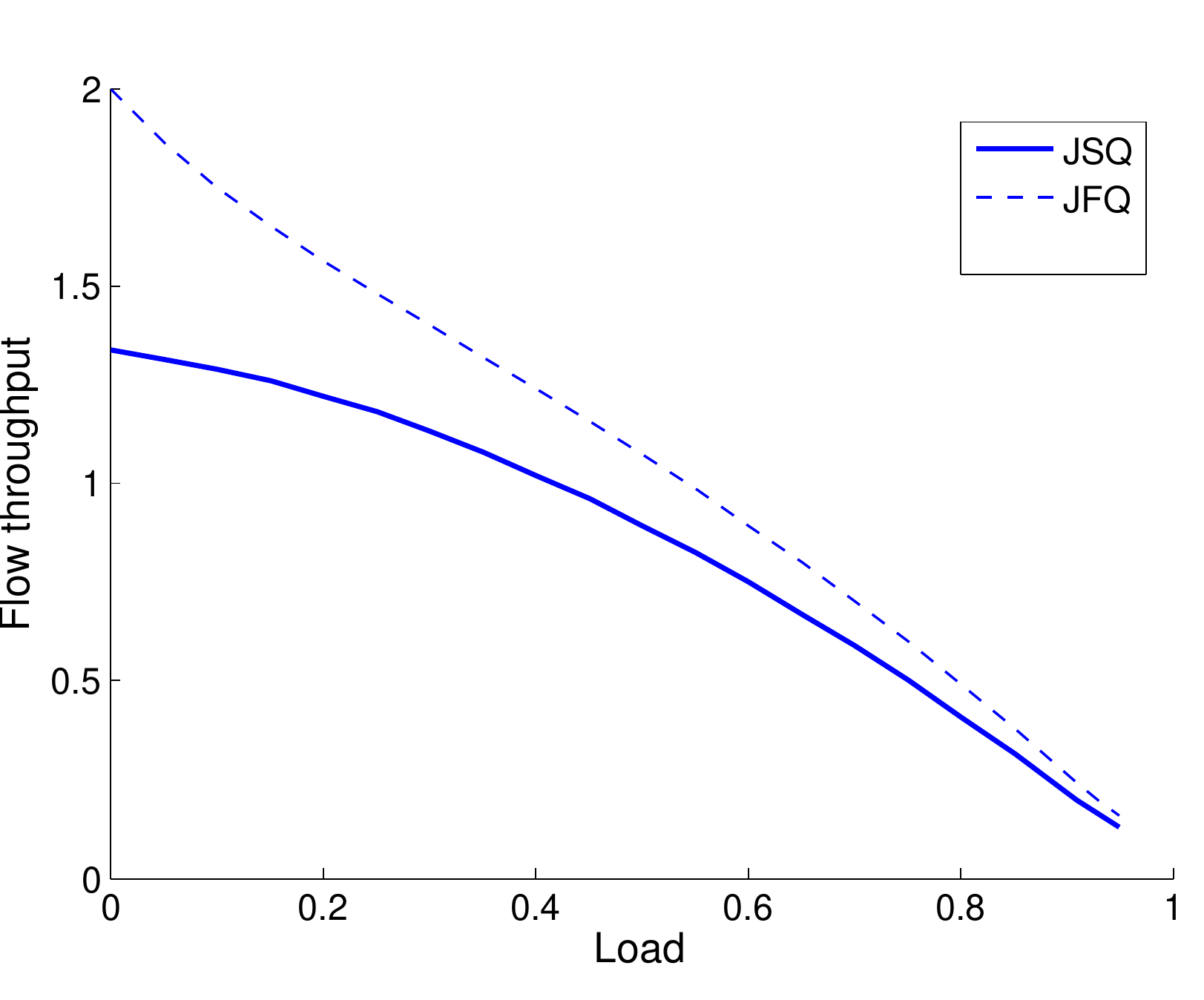}
\caption{\label{fig:JFQ} \textit{Mean throughput of SC users under JSQ and JFQ scheduling for $C_1=1$ and $C_2=2$.}}
\end{center}
\end{figure}



\subsection{Joint performance of SC and DC users}
\label{sec:impactP}
Assume now that both SC and DC users are present in the cell. Throughput performance is given in Figure \ref{fig:SCDC_diff}  for different values of $\phi$ and for $C_1=1$, $C_2=1.3$. As in the case of equal capacity servers, we notice a slight improvement of DC performance when SC traffic is present in the system.  Once more, the impact of $\phi$ is not considerable. 
In other words, since performance is quasi-independent of the mix of SC and DC traffic, we can study the performance of each class independently by considering the system with only SC or only DC traffic. 

\begin{figure}[t]
\begin{center}
\includegraphics[width=9.3cm, trim=2cm 8.4cm 2.5cm 9cm, clip]{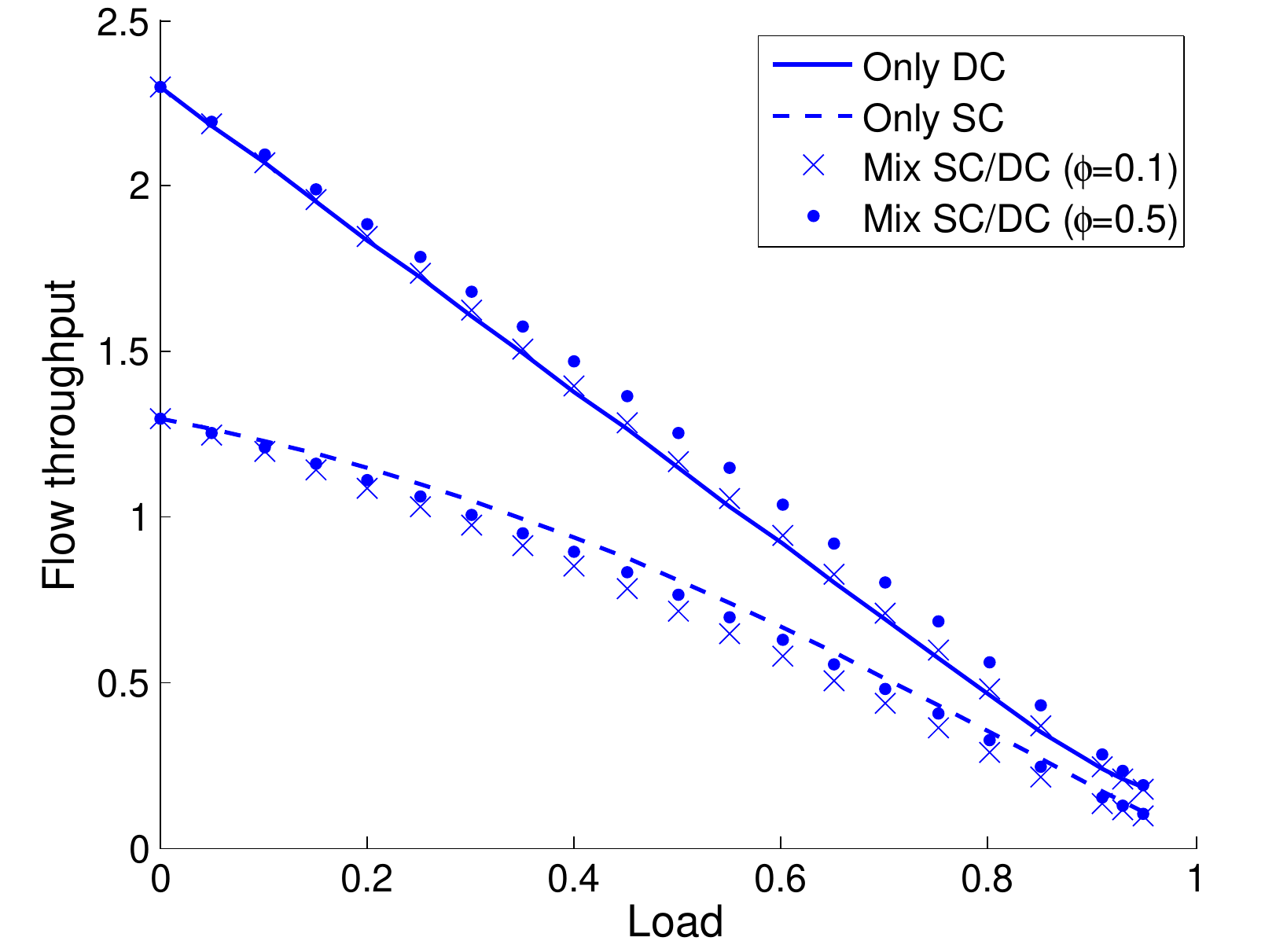}
\caption{\label{fig:SCDC_diff} \textit{Mean throughput of SC and DC users in terms of network load for different values of $\phi$: $\phi=0$ (only DC traffic), $\phi=0.1$, $\phi=0.5$ (mix of SC and DC traffic) and  $\phi=1$ (only SC traffic) and for $C=1$.}}
\end{center}
\end{figure}

Figure \ref{fig:3Loads} shows the impact of the percentage of DC traffic on the  {\it average throughput} of the cell for different values of the load. The average troughput in area $A_j$, $\overline{\gamma}_j$, is defined by
\begin{equation}
\overline{\gamma}_j = \phi \cdot \gamma_{SC,j}+(1-\phi) \cdot \gamma_{DC,j}.
\label{eq:gamma_tot}
\end{equation}
Note that at any network load the total throughput increases as the percentage of DC traffic increases. The increase is more significant at lower network load. This can be explained as follows. We have seen that $\gamma_{SC}$ and $\gamma_{DC }$ are quasi-insensitive to variations of $\phi$. In view of (\ref{eq:gamma_tot}), $\overline{\gamma}$ varies quasi-linearly with $\phi$ with slope $\gamma_{SC}-\gamma_{DC}$. As shown in Figure \ref{fig:SCDC_diff}, this slope is maximal at low load. 

\begin{figure}[t]
\begin{center}
\includegraphics[width=7.9cm, trim=2cm 8cm 2.5cm 8cm, clip]{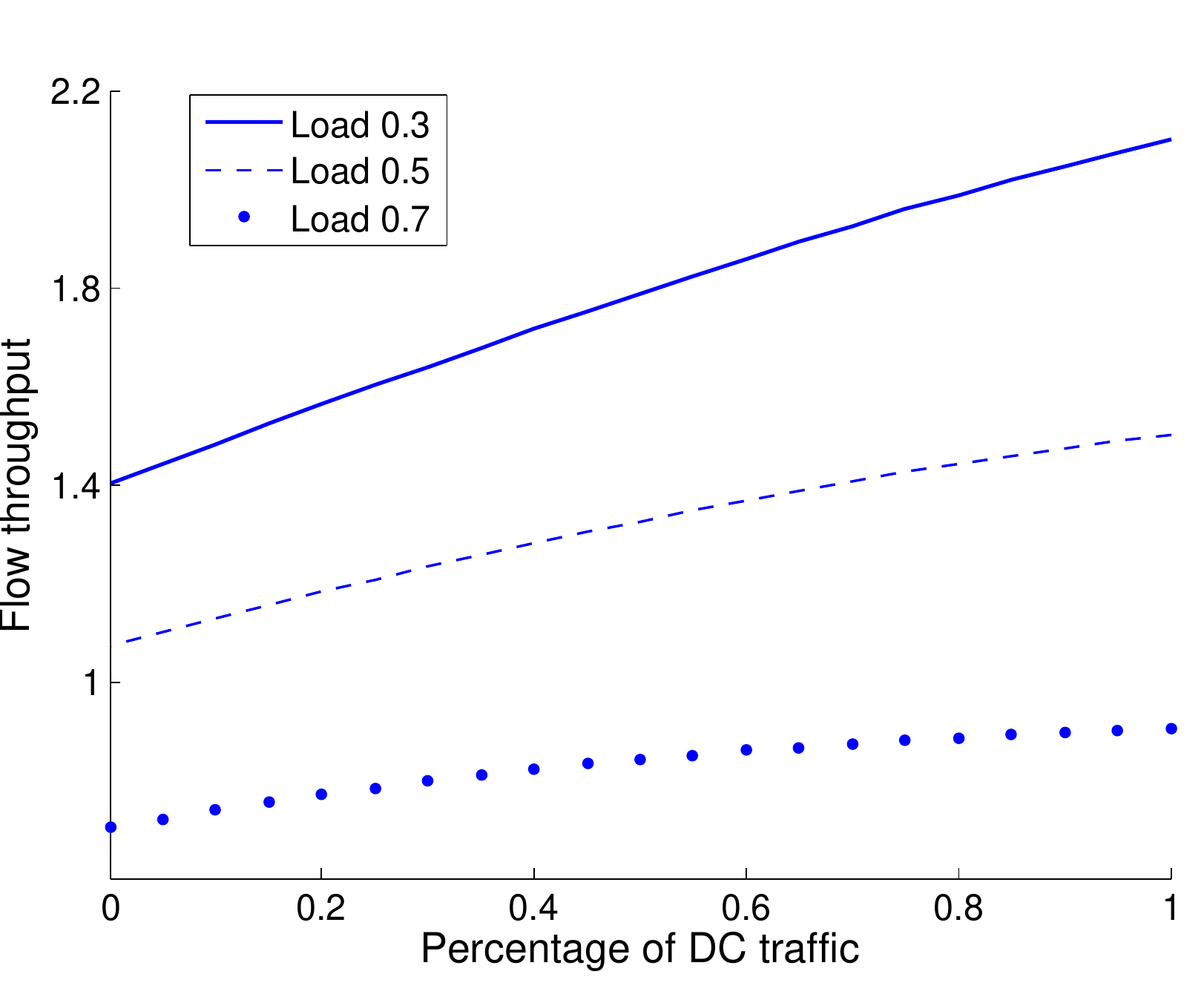}
\caption{\label{fig:3Loads} \textit{Mean throughput of SC and DC users for different values loads and for $C_1=1$, $C_2=2$.}}
\end{center}
\end{figure}



\subsection{Application to HSDPA and LTE networks}
\label{sec:HSDPA}
We are now interested in applying our model so as to estimate the traffic capacity of Dual Cell HSDPA and Dual Band HSDPA. We consider a cell having two distinct areas, one corresponding to transmission conditions in the cell center and the other corresponding to transmission conditions at the cell edge. Our aim is to determine the traffic that can be sustained by the cell such that SC users in the cell edge attain a certain target throughput. 

We consider a cell with a total radius of $R=600$ m and consider that users are uniformly distributed over the cell. In \cite{900MHZ}, the capacities of DC and DB HSDPA are obtained by means of a static system level simulator. 
According to \cite{900MHZ}, capacities of DC HSDPA correspond to $C_{1,1} = C_{2,1}=10$ Mbit/s, while DB HSDPA has  $C_{1,1} =10$ Mbit/s and $C_{2,1}=14$ Mbit/s, for carriers in the 2.1 GHZ and the 900 MHZ band, respectively.  For LTE, we consider $C_{1,1} =150$ Mbit/s and $C_{2,1}=70$ Mbit/s. In all cases, carrier capacities in the cell edge are about ten times less than those at the cell center and 50\% of the users are in the cell center.

Table \ref{tab:2zones_thr} shows the maximum traffic intensity that can be sustained by a HSDPA or LTE cell such that the target average throughput at the cell edge is 1 Mbit/s and 10 Mbit/s, respectively. As for the total average throughput (cf. \S \ref{sec:impactP}), the traffic intensity increases as the percentage of DC traffic increases, for both HSDPA and LTE. As expected, the traffic capacity of DB HSDPA is superior to that of DC HSDPA. 

In the case of LTE-A, in which one carrier may have a capacity largely superior to the other, we have seen in Section \ref{sec:SConlydiffCapa} that the throughput of SC users can be approximated by (\ref{eq:SCapprox}). Taking into account the system's insensitivity to $\phi$ and the throughput of DC users given by (\ref{labelDConly}), we are able derive the traffic intensity $\Theta = (\alpha+\beta)\sigma$. In view of definitions (\ref{eq:load_gen}) and (\ref{eq:gamma_tot}) applied to the edge area $j = J$, $\Theta$ can consequently be approximated by
$$
\Theta \approx \overline{C} \left( 1- \frac{\overline{\gamma}_J}{(1-\phi)C_{1,J}+C_{2,J}}\right).
$$

We verify that the above approximation is consitent with the values given in Table \ref{tab:2zones_thr} for the LTE network. 

\begin{table}
\begin{center}
  \caption{\label{tab:2zones_thr} \textit{Traffic intensity for different percentages of SC traffic ($\phi$) when the target average throughput at cell edge  is $\overline\gamma_J=$1 Mbit/s for HSDPA, and $\overline\gamma_J=$10 Mbit/s for LTE}}
  \begin{tabular}{ | c | c | c | c | c | }
    \hline
     & $\phi=1$ & $\phi=0.8$ & $\phi=0.5$ & $\phi=0.2$ \\ \hline
      DB HSDPA (Mbit/s) &  1.75 & 2.11 & 2.38 &  2.65  \\ \hline
      DC HSDPA (Mbit/s) &  - & 1.08  &  1.48  & 1.73 \\ \hline \hline
      LTE (Mbit/s)& 12.8 & 16 & 19.6  & 24.8 \\ \hline
	
  \end{tabular}
\end{center}
\end{table}




\section{Conclusion}
\label{sec:concl}

This paper addresses the performance of load balancing and  carrier aggregation in HSPA+ and LTE-A networks. In CA networks, a joint scheduler is used to manage two or more carriers. Load balancing schemes are needed in order to evenly distribute the incoming traffic.  We have proposed two such schemes: JFQ which allows to distribute the traffic of non-CA users, and VB which balances the traffic of CA users. By means of mathematical modeling, we have shown that both schemes achieve quasi-ideal load balancing over the carriers. Indeed, when only DC users are present in the system,  the throughput obtained by DC flows under VB is equal to the throughput obtained for a single carrier of capacity $C_{1,j}+C_{2,j}$ shared according to the PS policy; VB thus maximizes the utilization of the two carriers. SC users, on the other hand, are constrained to using only one of the two carriers. We have seen that JFQ favors the usage of the highest capacity carrier, thus maximizing carrier utilization. 

The proposed model also allows us to gain insight into the performance of multi-carrier networks in which traffic is generated by both CA and non-CA users. We have shown that the throughput of both  CA and non-CA users is practically insensitive to the percentage of CA users.  
Consequently, we can evaluate the performance of each class independently by considering the system with only SC or only DC traffic.
In future work, we intend to intend to extend the developed models to a larger number of aggregated carriers. 

\bibliographystyle{IEEEtran}
\bibliography{IEEEabrv,biblio}


\end{document}